\theoremstyle{plain}
\setlist[enumerate]{itemsep=0mm}
\newif\if@restonecol
\newtheorem{theorem}{Theorem}[section]
\newtheorem{proposition}{Proposition}[section] 
\newtheorem{lemma}{Lemma}[section]
\theoremstyle{remark}
\theoremstyle{definition}
\newtheorem{example}{Example}[section]
\begin{document}
\markboth{Dohan Kim}{International Journal of Computer Mathematics}

\title{Group-theoretical vector space model}

\author{Dohan Kim$^{\ast}$\thanks{$^\ast$ Email: dkim@airesearch.kr
\vspace{6pt}}  \\\vspace{6pt}  {\em{A.I. Research Co., 2537-1 Kyungwon Plaza 201, Sinheung-dong, Sujeong-gu, Seongnam-si, Kyunggi-do, 461-811, South Korea}}\\\vspace{6pt}
}
\maketitle

\begin{abstract}
This paper presents a group-theoretical vector space model (VSM) that extends the VSM with a group action on a vector space of the VSM. We use group and its representation theory to represent a dynamic transformation of information objects, in which each information object is represented by a vector in a vector space of the VSM. Several groups and their matrix representations are employed for representing different kinds of dynamic transformations of information objects used in the VSM. We provide concrete examples of how a dynamic transformation of information objects is performed and discuss algebraic properties involving certain dynamic transformations of information objects used in the VSM.
\end{abstract}

\begin{keywords}
vector space model; group-theoretical vector space model; group representation; feature space; information retrieval
\end{keywords}

\begin{classcode}
15A03; 15A04; 06B15; 68Q55; 68P20
\end{classcode}

\section{Introduction}
Vectors have been widely used in the field of cognitive science~\cite{Aisbett2001, Roach1978}, machine learning~\cite{Duda2012}, semantics~\cite{Lowe2001}, and information retrieval (IR)~\cite{Singhal2001,Grossman2004,Salton1975}. The vector space model (VSM)~\cite{Salton1975,Silva2004} is a model based on a vector space, which represents information objects (e.g., terms, images, documents, queries, etc.) by vectors in a vector space. Each dimension of a vector space represents a feature of an information object, corresponding to a basis element of a vector space of the VSM~\cite{Lowe2001}. A wide variety of weighting schemes~\cite{Lee1997,Chisholm1999,Sun2004,Berry1999} have been proposed and tested, in which each component of an information-object vector reflects the importance of the corresponding feature of an information-object vector. For the weighted information-object vectors, distance functions are often used to determine how to measure the similarity between information-object vectors~\cite{Wong1984}. One common similarity measure between two information-object vectors is the cosine similarity, measuring the cosine of the angle between two information-object vectors in a vector space of the VSM~\cite{Berry1999}. Besides its intuitive nature, the VSM has also been proven to be effective in IR and relevance ranking~\cite{Melucci2005,Silva2004}. Meanwhile, relevance in IR is often context-dependent as information may evolve with the user, place, and time~\cite{Melucci2005,Melucci2008a,Melucci2008b}, which has not been reflected in the classic, standard VSM~\cite{Salton1975,Singhal2001,Lee1997}. Although the VSM incorporating context and its variants have already been researched~\cite{Melucci2005,Melucci2008a,Melucci2008b}, there is a lack of a systematic approach to representing a dynamic transformation of information objects used in the VSM. Moreover, to the best of our knowledge, theoretical foundations of utilizing group theory for the VSM have not been established. In this paper we present a group-theoretical VSM and discuss properties on several types of dynamic transformations of information-object vectors in a vector space of the VSM. We also show that some properties are invariant to certain dynamic transformations of information-object vectors. The rest of this paper is organized as follows. Section~\ref{sec:VectorSpaceModel} gives a brief overview of the VSM. In Section~\ref{sec:GroupActionVSM} we present our group-theoretical VSM to represent a dynamic transformation of information objects used in the VSM. In Section~\ref{sec:relatedwork} we provide related work and discussion. We give concluding remarks in Section~\ref{sec:Conclusions}. In Appendix we provide the necessary mathematical background on vector spaces, groups, and their representations used in this paper.

\section{Vector Space Model (VSM)}
\label{sec:VectorSpaceModel}
In this section we give a brief overview of the classical vector space model (VSM) used in this paper. 
Vector spaces play an important role in cognitive science~\cite{Aisbett2001, Roach1978}, semantics~\cite{Lowe2001}, pattern classification~\cite{Duda2012}, and information retrieval (IR). In particular, they are commonly used in IR, where IR concerns with methods and procedures of searching and obtaining the required information from information resource or corpus~\cite{Silva2004,Latha2008}.
In IR the \emph{Boolean retrieval model}~\cite{Manning2008} poses queries having the form of Boolean expression of terms, in which each query consists of terms combined with Boolean operators, such as AND, OR, and NOT. Each document is considered as a set of words in the Boolean retrieval model~\cite{Manning2008}. However, the Boolean retrieval model has some limitations, such as the lack of similarity measure and a document ranking method~\cite{Singhal2001}. In contrast to the Boolean retrieval model, the VSM has a means to measure the similarity between a query and information-object vectors and to rank information-object vectors according to their similarity scores to the query~\cite{Singhal2001,Manning2008}. 

The VSM  has been formalized as a quadruple $<B, W, S, F>$~\cite{Lowe2001}, where $B$ denotes a set of basis elements of a vector space $V$ of the VSM, $W$ specifies a weight function, $S$ is a similarity measure that maps a pair of information-object vectors to a scalar-valued quantity representing their similarity, and $F$ is a transformation that takes one vector space to another vector space. One of the main purposes of $F$ is to reduce the dimensionality of $V$~\cite{Lowe2001,Tous2006}. $F$ may also be the identity map that transforms $V$ to itself. Note that a vector space of the VSM is often considered as a \emph{feature space}~\cite{Turney2010}. Therefore, a wide variety of \emph{feature weighting} (or \emph{scaling}) schemes~\cite{Liu2004} can be inherited, depending on what kind of a feature space is employed for the VSM.
\\\\
(1) \emph{Basis Elements B}: $B$ is a set of basis elements $b_1,\ldots, b_n$ that determine the dimensionality of a vector space $V$ of the VSM. Each dimension of $V$ represents a feature of an information object. Each information-object vector $v$ is generated by $B$, i.e., $v=\sum_{i=1}^{n}w_ib_i$, where $w_i$'s for $i=1,\ldots, n$ are weights or coefficients. Note that if $B^\prime$ is a set of basis elements $b_1^\prime,\ldots, b_n^\prime$ of $V$, then $v$ can also be generated by $B^\prime$, i.e., $v=\sum_{i=1}^{n}w_i^\prime b_i^\prime$. One basis can be converted into another basis to reflect a contextual change of information-object vectors, in which a context may refer to the time, space, semantic of information objects, and so on~\cite{Melucci2005, Melucci2008a}. It means that a basis of a vector space in the VSM can be constructed to represent a context~\cite{Melucci2008a}. \\\\
(2) \emph{Weight function W}: $W$ is a weight function that maps an information object to its normalized form that is often represented as a coordinate vector. Each component of the coordinate vector represents the weight of the corresponding feature of the information object. $W$ is closely related to feature weighting, which relies on the type of a feature space. If a vector space $V$ of the VSM is given as an $n$-dimensional term space~\cite{Silva2004}, then a query vector $Q$ and a document vector $D_i$ are represented as $Q= (w_{Q,1}, w_{Q,2},\ldots, w_{Q,n})$ and $D_i = (w_{i,1}, w_{i,2},\ldots, w_{i,n})$, respectively. Each term represents each feature of an information object, and each component of an information-object vector represents the importance of a term in a document or query vector~\cite{Lee1997}. Note that $n$ distinct terms are considered in an $n$-dimensional term space. There are a wide variety of ways to determine the weight of a term in a given information object. The simplest approach is the \emph{frequency weighting}~\cite{Liu2004}, in which the weight is simply equal to the frequency of a feature. A common approach to term-weighting is the \emph{tf-idf}~\cite{Lee1997,Manning2008} method, where the weight of a term in a document vector is determined by the local and global factor. The local factor (\emph{term frequency} $tf_{i,j}$) indicates how often term $j$ appears in  document $i$, while the global factor (\emph{inverse document frequency} $idf_j$) indicates how often term $j$ appears in a document collection~\cite{Lee1997}. More specifically, the weight of  term $j$ in document $i$ for the \emph{tf-idf} method is defined as $w_{i,j} := tf_{i,j} \times idf_j = tf_{i,j} \times \displaystyle\mathrm{log}(N/df_j)$, where $N$ is the total number of documents in a document collection and $df_j$ denotes the number of documents (in a document collection) containing term $j$~\cite{Lee1997, Manning2008}. Note that the inverse document frequency ($idf_j:=\displaystyle\mathrm{log}(N/df_j)$) assigns a low value to a term that occurs in a large number of documents, while assigning a high value to a term that occurs in a small number of documents in a document collection~\cite{Lee1997}. The interested reader may also refer to~\cite{Liu2004, Dumais1991, Manning2008} for other term-weighting schemes, such as \emph{Entropy weighting}~\cite{Liu2004} and \emph{Logarithmic weighting}~\cite{Dumais1991}.
\\\\
(3) \emph{Similarity measure S}: $S$ is a similarity measure that maps each pair of information-object vectors to a scalar-valued similarity score. The angle between a pair of information-object vectors can be used as a simple similarity measure between the pair of information-object vectors. Specifically, the cosine of the angle can be used as a numeric similarity measure (i.e., 1.0 for identical vectors while 0.0 for orthogonal vectors). Furthermore, if two information-object vectors in a vector space $\mathbb{Re}^n$ of the VSM are normalized to the unit length, the cosine of the angle between two information-object vectors is simply the inner product of two information-object vectors. Now, the cosine similarity between two information-object vectors $v_1$ and $v_2$ is defined as $sim(v_1, v_2):=(v_{1}\cdot v_{2})/(\|v_1\|\|v_2\|)$, where $v_{1}\cdot v_{2}$ denotes the inner product of information-object vectors $v_{1}$ and $v_{2}$. Therefore, in terms of the cosine similarity measure, the higher the value of $sim(u_i, u_j)$, the more similar information-object vectors $u_i$ and $u_j$ are. Other methods are also available for the similarity measure based on a distance function. The interested reader may refer to~\cite{Zezula2006} for further details.
\\\\
(4) \emph{Transformation F}: $F$ is a transformation\footnote{Since a transformation $F$ is often used for dimensionality reduction, it is distinguished from an (invertible) linear transformation in this paper. Note that an invertible linear transformation (i.e., \emph{isomorphism}~\cite{Dummit2004}) from a vector space $V$ to itself serves as an element of the general linear group $GL(V)$ (see Appendix).} that transforms one vector space $V$ to another vector space $V^\prime$.) The main purpose of $F$ is to reduce the dimensionality of $V$ in such a manner that the dimensionality of $V^\prime$ is smaller than the dimensionality of $V$. The matrix decomposition techniques are often used for dimensionality reduction (i.e., \emph{singular value decomposition}~\cite{Lowe2001} and \emph{QR decomposition}~\cite{Berry1999}). In some cases it is also possible to reduce the dimensionality in the preprocessing steps (e.g., \emph{stop word elimination} and \emph{stemming}~\cite{Singhal2001}). $F$ can also be the identity transformation that maps a vector space $V$ to itself. 
\\

Although the preprocessing and dimensionality reduction steps are often necessary for the VSM, we omit them in this paper. The interested reader may refer to~\cite{Lowe2001, Berry1999, Singhal2001} for further details. Unless otherwise stated,  $B$ denotes a set of basis elements of a given vector space, $W$ \emph{tf-idf}, $S$ cosine similarity, and $F$ denotes the identity map in $<B, W, S, F>$ of the VSM used in this paper. We assume that every vector space of the VSM is finite-dimensional in this paper.

\begin{example}
\label{exam}
This example illustrates how the \emph{tf-idf} weighting method and the cosine similarity measure of the VSM are applied to document ranking, where each document and a query are represented by \emph{a bag of words} (unordered words with duplicates allowed)~\cite{Turney2010}. The bag-of-words model is widely used in a document and image representation~\cite{Filliat2007, Fortuna2005}, spam filtering~\cite{Erdelyi2009}, etc. The following figure shows query $Q$ and three documents $D_1$, $D_2$, and $D_3$, each of which is represented by a bag of words. 

\begin{figure}[h!]
\small
$Q$: $\{term_1 \;\; term_2\}$ \\
$D_1$: $\{term_5  \;\; term_1 \;\; term_1 \;\; term_5\}$\\
$D_2$: $\{term_2 \;\; term_3 \;\; term_3  \;\; term_6  \;\; term_4\}$\\
$D_3$: $\{term_2 \;\; term_1 \;\; term_2\}$
\caption{Bag of words for a query $Q$ and documents $D_1$, $D_2$, and $D_3$.}
\label{fig:QandD}
\end{figure}
There are six distinct terms in Figure~\ref{fig:QandD}. Table~\ref{table:TermWeights} shows term weights for each document and query using the \emph{tf-idf} weighting method ~\cite{Grossman2004,Manning2008}.
\begin{table}[ht]
\centering \small
\caption{Term weights for $Q$, $D_1$, $D_2$, and $D_3$ in Figure~\ref{fig:QandD}.}
\label{table:TermWeights}
\begin{tabular}{|c|c|c|c|c|c|c|c|c|c|c|c|}
\hline
\multicolumn{12}{|l|}{$\textrm{Term weights } w_{i,j}:=tf_{i,j} \times idf_j$ } \\
\multicolumn{12}{|l|}{$(tf_{i,j}:\textrm{term frequency}, \;idf_j:\textrm{inverse document frequency})$} \\ \hline
\multicolumn{12}{|l|}{$\textrm{Total number of documents N=3}, idf_j:=\textrm{log}(N/df_j)$} \\ 
\multicolumn{12}{|l|}{($df_j:\textrm{number of documents containing term}\,j$)} \\ \hline

& \multicolumn{4}{c|}{$tf_{i,j}$} & & &&
\multicolumn{4}{c|}{$w_{i,j}=tf_{i,j} \times idf_j$}\\ \hline 

$\textrm{Terms}$ & $Q$ & $D_1$& $D_2$&$D_3$& $df_j$ & $N/df_j$ & $idf_j$ & $Q$ & $D_1$ & $D_2$ & $D_3$ \\

\hline
$term_1$ & 1 & 2& 0 & 1 & 2 & 3/2 & 0.176 & 0.176 & 0.352 & 0 &0.176\\
\hline
$term_2$ & 1 & 0& 1 & 2 & 2 & 3/2 & 0.176 & 0.176 & 0 & 0.176 & 0.352\\
\hline
$term_3$ & 0 & 0& 2 & 0 & 1 & 3/1 & 0.477 & 0 & 0 & 0.954 &0\\
\hline
$term_4$ & 0 & 0& 1 & 0 & 1 & 3/1 & 0.477 & 0 & 0 & 0.477 & 0\\
\hline
$term_5$ & 0 & 2 & 0 & 0 & 1 & 3/1 & 0.477 & 0 & 0.954 & 0 & 0\\
\hline
$term_6$ & 0 & 0& 1 & 0 & 1 & 3/1 & 0.477 & 0 & 0 & 0.477 & 0\\
\hline

\end{tabular}
\end{table}

Using Table~\ref{table:TermWeights}, we compute the cosine similarity between $Q$ and $D_i$ for $1\leq i \leq 3$. Since $\|D_i\|=\sqrt{\sum_jw_{i,j}^2}$ and $\|Q\|=\sqrt{\sum_jw_{Q,j}^2}$, we have
\\\\
$\|Q\|=\sqrt{0.176^2+0.176^2} \approx \sqrt{0.062} \approx 0.249$,\\
$\|D_1\|=\sqrt{0.352^2 + 0.954^2} \approx \sqrt{1.034} \approx 1.017$,\\
$\|D_2\|=\sqrt{0.176^2+0.954^2+0.477^2+0.477^2} \approx \sqrt{1.396} \approx {1.182}$,\\
$\|D_3\|=\sqrt{0.176^2+0.352^2} \approx \sqrt{0.155} \approx{0.394}$.\\\\
Since $Q \cdot D_i = \sum_jw_{Q,j}w_{i,j}$, we have
\\\\
$Q \cdot D_1 = 0.176 \times 0.352 \approx 0.062$,\\
$Q \cdot D_2 = 0.176 \times 0.176 \approx 0.031$,\\
$Q \cdot D_3 = (0.176 \times 0.176) + (0.176 \times 0.352) \approx 0.093$.\\\\
Now, the cosine similarity measure between query $Q$ and document $D_i$\footnote{By a slight abuse of notation, we use a document (respectively, a query) and its document vector (respectively, query vector) with the same notation in this paper. The distinction is clear from the context.}  for $1\leq i\leq 3$ are computed as follows:\\\\
$sim(Q, D_1)=(Q \cdot D_1)/(\|Q\|\|D_1\|) \approx 0.062/(0.249 \times 1.017) \approx 0.245$,\\
$sim(Q, D_2)=(Q \cdot D_2)/(\|Q\|\|D_2\|) \approx 0.031/(0.249 \times 1.182) \approx 0.105$,\\
$sim(Q, D_3)=(Q \cdot D_3)/(\|Q\|\|D_3\|) \approx 0.093/(0.249 \times 0.394) \approx 0.949$.\\\\
\indent For the given query $Q$, $D_3$ shows the highest rank with $sim(Q, D_3)\approx0.949$, while $D_2$ shows the lowest rank with $sim(Q, D_2)\approx0.105$. Therefore, according to the cosine similarity measure, document $D_3$ is the most similar to query $Q$, while document $D_2$ is the least similar to query $Q$.
\end{example}

\section{Group Actions on a Vector Space of the VSM}
\label{sec:GroupActionVSM}

In this section we use several groups to represent dynamic transformations of information objects. We show that some properties are preserved for certain dynamic transformations, in which those dynamic transformations of information objects are represented by a group action on a vector space of the VSM. For the bag-of-words model, we assume that although the content of an information object can be changed by a dynamic transformation, no term can be introduced during a dynamic transformation of information objects. We first describe how an orthogonal group acts on a vector space $V=\mathbb{Re}^n$ of the VSM.

For each $n$, the set of all $n \times n$ orthogonal matrices with real entries forms a subgroup of $GL(n, \mathbb{Re})$, denoted by $O(n, \mathbb{Re})$, in which a square matrix $M$ is called \emph{orthogonal} if $M^\top=M^{-1}$~\cite{Fulton1991}. A linear transformation $T:\mathbb{Re}^n \rightarrow \mathbb{Re}^n$ is called an \emph{orthogonal transformation} on $\mathbb{Re}^n$ if its transformation matrix in the standard (ordered) basis is an orthogonal matrix with real entries~\cite{Anton1993}. Orthogonal matrices include rotation and permutation matrices~\cite{Anton1993}.
\begin{proposition}
\label{prop:orthogonal}
Let $V=\mathbb{Re}^n$ be a vector space of the VSM. If $O(n, \mathbb{R})$ acts on $V$ by matrix multiplication, then it preserves the cosine similarity between information-object vectors in $V$.
\end{proposition}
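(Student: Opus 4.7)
The plan is to unpack the definition of cosine similarity and exploit the defining identity $M^\top M = I$ of an orthogonal matrix. Concretely, for any $M \in O(n,\mathbb{Re})$ and any $v_1, v_2 \in V$, I would show the two ingredients of $sim(v_1,v_2)=(v_1\cdot v_2)/(\|v_1\|\|v_2\|)$ are individually invariant under the action $v \mapsto Mv$.

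First I would verify preservation of the inner product: writing $v_1 \cdot v_2 = v_1^\top v_2$, I compute
\[
(Mv_1)\cdot(Mv_2) \;=\; (Mv_1)^\top (Mv_2) \;=\; v_1^\top M^\top M\, v_2 \;=\; v_1^\top I\, v_2 \;=\; v_1 \cdot v_2,
\]
using $M^\top = M^{-1}$. Next I would specialize this to $v_1 = v_2 = v$ to conclude $\|Mv\|^2 = (Mv)\cdot(Mv) = v\cdot v = \|v\|^2$, so $\|Mv\| = \|v\|$ for every $v \in V$.

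Finally I would substitute these two identities into the cosine similarity formula to obtain
\[
sim(Mv_1, Mv_2) \;=\; \frac{(Mv_1)\cdot(Mv_2)}{\|Mv_1\|\,\|Mv_2\|} \;=\; \frac{v_1\cdot v_2}{\|v_1\|\,\|v_2\|} \;=\; sim(v_1, v_2),
\]
which is the desired invariance under the action of $O(n,\mathbb{Re})$. (A small caveat worth a sentence in the proof: the formula requires $v_1, v_2 \neq 0$, but since $M$ is invertible, $Mv_i = 0$ iff $v_i = 0$, so the non-degeneracy assumption is also preserved.)

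There is no real obstacle here; the entire content of the proposition is essentially the well-known fact that orthogonal transformations are precisely the linear isometries of Euclidean space, so the proof reduces to a one-line matrix computation plus bookkeeping. The only thing to be careful about is simply being explicit that matrix multiplication is indeed a group action (associativity $(MN)v = M(Nv)$ and $I\cdot v = v$), which is immediate from the definition of $GL(n,\mathbb{Re})$-action and need not be belabored.
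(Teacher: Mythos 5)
Your proof is correct and follows essentially the same route as the paper's: show $M^\top M = I$ preserves the inner product, specialize to $v_1=v_2$ to get norm preservation, and substitute into the cosine similarity formula. The only additions (the nonzero caveat and the group-action bookkeeping) are harmless and do not change the argument.
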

\begin{proof}
By the definition of the orthogonal group, if $O(n, \mathbb{R})$ acts on $V$ by matrix multiplication, we have $M v_1 \cdot M v_2 = (Mv_1)^\top Mv_2 = v_1^\top M^\top Mv_2 = v_1^\top v_2 = v_1 \cdot v_2$ for $M \in O(n, \mathbb{Re})$ and $v_1, v_2 \in V$. Since $(Mv)^\top (Mv)=v^\top v$, we have $\|Mv\|=\|v\|$ for $v \in V$.
Therefore, an orthogonal matrix preserves both the inner product and the length of information-object vectors. It follows that for information-object vectors $u, v \in V$ and $g \in O(n, \mathbb{R})$, we have $sim(gu, gv)=(gu \cdot gv)/(\|gu\|\|gv\|)=(u\cdot v)/(\|u\|\|v\|)=sim(u, v)$.
\end{proof}
\begin{example}\label{example:ex1} (\emph{Householder matrix}~\cite{Householder1958,Bischof1987}) Consider a reflection linear operator $R \in GL(V)$ of a vector space $V=\mathbb{Re}^n$ of the VSM that reflects each information-object vector through the vector hyperplane that is orthogonal to a unit vector $u$. The transformation matrix $[R]$ of the linear operator $R$ with respect to the standard (ordered) basis of $\mathbb{Re}^n$ is called a \emph{Householder matrix}, and is given by $I - 2uu^\top$. Let $H$ denote $[R]$. Since $HH^\top=I$ (see~\cite{Bischof1987}), we have $H \in O(n, \mathbb{R})$. 

For instance, suppose that we have six terms (i.e., $term_1$, $term_2$, $term_3$, $term_4$, $term_5$, and $term_6$) and three documents (i.e., $D_1$, $D_2$, and $D_3$) as shown in Table~\ref{table:TermWeights}. The $6 \times 3$ term-by-document matrix $D$ is denoted as follows:
\[ D=\left( \begin{array}{ccc}
0.352$\text{}$ & 0$\text{ }$ & 0.176\\
0$\text{}$ & 0.176$\text{ }$ & 0.352\\
0$\text{}$ & 0.954$\text{ }$ & 0\\
0$\text{}$ &0.477$\text{ }$ & 0\\
0.954$\text{}$ & 0$\text{ }$&0\\
0$\text{}$ & 0.477$\text{ }$&0\end{array} \right).\]

Each column corresponds to a document $D_j$, while each row corresponds to a term $term_i$. Each element $d_{ij}$ in $D$ represents the term weight of $\text{term}_i$ associated with document $D_j$. Let $u=[-\sqrt{2}/2, \sqrt{2}/2, 0, 0, 0, 0]^\top$ and select the vector hyperplane of $\mathbb{Re}^6$ that is orthogonal to $u$. Then, the Householder matrix $H^\prime$ is computed as follows:

\[ H^\prime=\left( \begin{array}{cccccc}
0$\text{ }$ & 1$\text{ }$ & 0$\text{ }$&0$\text{ }$&0$\text{ }$&0 \\
1$\text{ }$ & 0$\text{ }$ & 0$\text{ }$&0$\text{ }$&0$\text{ }$&0 \\
0$\text{ }$ & 0$\text{ }$ & 1$\text{ }$&0$\text{ }$&0$\text{ }$&0 \\
0$\text{ }$ & 0$\text{ }$ & 0$\text{ }$&1$\text{ }$&0$\text{ }$&0 \\
0$\text{ }$ & 0$\text{ }$ & 0$\text{ }$&0$\text{ }$&1$\text{ }$&0 \\
0$\text{ }$ & 0$\text{ }$ & 0$\text{ }$&0$\text{ }$&0$\text{ }$&1 \\ \end{array} \right).\] 

Now, the transformation of $D$ by $H^\prime$ is computed as follows:
\[ D^\prime=H^\prime D=\left( \begin{array}{ccc}
0$\text{}$ & 0.176$\text{ }$ & 0.352\\
0.352$\text{}$ &0$\text{ }$ & 0.176\\
0$\text{}$ &0.954$\text{ }$ & 0\\
0$\text{}$ &0.477$\text{ }$ & 0 \\
0.954$\text{}$ & 0$\text{ }$&0\\
0$\text{}$ &0.477$\text{ }$&0\end{array} \right).\]

The first column of $D^\prime$ represents the transformation of $D_1$ by $H^\prime$, the second column of $D^\prime$ the transformation of $D_2$ by $H^\prime$, and the third column of $D^\prime$ represents the transformation of $D_3$ by $H^\prime$. It basically replaces $term_1$ with $term_2$, and vice versa\footnote{For a further consideration of a permutation of the basis vectors, consider a symmetric group $S_n$ acting on a vector space $V=\mathbb{Re}^n$. Let $B=\{b_1,\ldots, b_n\}$ be a basis of $V$. Then, $S_n$ acts on $V$ by $g (\sum_i{c_ib_i}) = \sum_i{c_ib_{g(i)}}$ for $g \in S_n$, $c_i \in \mathbb{Re}$, and $\sum_ic_ib_i \in V$. See~\cite{Alperin1995, Dummit2004} for further details.}, in $D_1$, $D_2$, and $D_3$. Since $H^\prime \in O(6, \mathbb{R})$, the cosine similarity among $D_1$, $D_2$, and $D_3$ are preserved among $H^\prime D_1$, $H^\prime D_2$, and $H^\prime D_3$ by Proposition~\ref{prop:orthogonal}.
\end{example}

The set of all $n \times n$ invertible diagonal matrices with real entries forms a subgroup of $GL(n, \mathbb{Re})$~\cite{Rotman1994}, which is denoted by $D(n, \mathbb{Re})$ in this paper. We first describe a \emph{scaling matrix}~\cite{Bretscher1997}. A scaling matrix is a diagonal matrix, in which each element in the main diagonal represents a \emph{scaling factor} $s_i$ for the $i$-th coordinate axis. If $s_i > 1$, it represents a \emph{dilation} transformation in the direction of the $i$-th coordinate axis. If $0<s_i<1$, it represents a \emph{contraction} transformation in the direction of the $i$-th coordinate axis. If $s_i = -1$, it represents a \emph{reflection} transformation in the direction of the $i$-th coordinate axis. Note that if a scaling matrix has no zero in its main diagonal, it is invertible.

We say that a linear operator $T:V\rightarrow V$ is \emph{diagonalizable scaling linear operator} if there exists an (ordered) basis of $V$ with respect to which the transformation matrix of $T$ is an invertible scaling matrix.

\begin{proposition}
\label{prop:nonprojective}
Let $V$ be an $n$-dimensional vector space over $\mathbb{Re}$ of the VSM. If $D(n, \mathbb{Re})$ acts on $V$ by matrix multiplication, $d \in D(n, \mathbb{Re})$ represents a transformation matrix of a diagonalizable scaling linear operator of $V$.
\end{proposition}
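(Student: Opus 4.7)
The plan is to verify the claim directly by unwinding the definitions and constructing the required ordered basis explicitly from the group action. First, I would recall that $D(n, \mathbb{Re})$ is defined as the subgroup of $GL(n,\mathbb{Re})$ consisting of invertible diagonal matrices with real entries, so every $d \in D(n, \mathbb{Re})$ has the form $\mathrm{diag}(s_1,\ldots,s_n)$ with each $s_i \neq 0$. Because $V$ is $n$-dimensional over $\mathbb{Re}$, any matrix in $D(n,\mathbb{Re})$ acts on $V$ by matrix multiplication only after one fixes an identification $V \cong \mathbb{Re}^n$, that is, after one fixes an ordered basis. This observation is where the proof begins.

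Next, I would fix an arbitrary ordered basis $B=\{b_1,\ldots,b_n\}$ of $V$, so that each $v \in V$ is represented by its coordinate vector $[v]_B \in \mathbb{Re}^n$, and the group element $d$ acts as $v \mapsto d\,[v]_B$. I would then define the linear operator $T_d : V \to V$ by specifying $T_d(b_i) := s_i\,b_i$ on the basis and extending linearly. By construction, $T_d$ is the linear operator whose transformation matrix in the basis $B$ is precisely $d$, which establishes the "represents a transformation matrix" part of the conclusion.

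It remains to check that $T_d$ is a diagonalizable scaling linear operator in the sense just defined in the text. Since $d \in D(n,\mathbb{Re}) \leq GL(n,\mathbb{Re})$, every diagonal entry $s_i$ is nonzero, so $d$ is an invertible scaling matrix in the sense of the paragraph preceding the proposition. Hence $B$ is an ordered basis with respect to which the transformation matrix of $T_d$ is an invertible scaling matrix, fulfilling the definition of a diagonalizable scaling linear operator.

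The proof is essentially a matter of carefully unwinding definitions, and no deep computation is required. The one mildly substantive point — the main (and modest) obstacle — is making explicit that the phrase "acts on $V$ by matrix multiplication" tacitly requires a fixed basis identification $V \cong \mathbb{Re}^n$, and that this very same basis is then used as the witnessing basis in the definition of a diagonalizable scaling linear operator. Once that identification is made, invertibility of $d$ inherited from $D(n,\mathbb{Re}) \leq GL(n,\mathbb{Re})$ supplies the nonzero scaling factors, and the conclusion follows immediately.
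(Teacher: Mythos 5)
Your proof is correct and follows essentially the same route as the paper's: both arguments simply unwind the definition of $D(n,\mathbb{Re})$, use invertibility to conclude that every diagonal entry is a nonzero scaling factor, and then read off $d$ as the transformation matrix of a diagonalizable scaling linear operator with respect to the chosen basis. Your version is slightly more explicit in constructing the operator $T_d$ on the basis and in flagging that ``acts by matrix multiplication'' presupposes a fixed basis identification $V\cong\mathbb{Re}^n$, but this is a refinement of presentation, not a different argument.
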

\begin{proof}
By the definition of $D(n, \mathbb{Re})$, $d \in D(n, \mathbb{Re})$ is a diagonal matrix. Since $D(n, \mathbb{Re})$ is a subgroup of $GL(n, \mathbb{Re})$, $d \in D(n, \mathbb{Re})$ is invertible. It follows that its determinant is not zero. Therefore, each $d_i$ in the main diagonal of $d$ is not zero and may represent a scaling factor in the direction of the $i$-th coordinate axis. It follows that $d$ is an invertible scaling matrix that represents a transformation matrix of a diagonalizable scaling linear operator of $V$ with respect to a given basis of $V$.
\end{proof}
\begin{example}\label{example:ex2}
Each component of an information-object vector can be varied by change of context (e.g., a time-dependent document collection~\cite{Nunes2011,Elsas2010}). This example shows the systematic way of changing weights using an invertible scaling matrix. The $6 \times 3$ term-by-document matrix $D$ in Example~\ref{example:ex1} was given as:
\[ D=\left( \begin{array}{ccc}
0.352$\text{}$ & 0$\text{ }$ & 0.176\\
0$\text{}$ & 0.176$\text{ }$ & 0.352\\
0$\text{}$ & 0.954$\text{ }$ & 0\\
0$\text{}$ &0.477$\text{ }$ & 0\\
0.954$\text{}$ & 0$\text{ }$&0\\
0$\text{}$ & 0.477$\text{ }$&0\end{array} \right).\] 

Suppose that an invertible scaling matrix $S$ is given below:
\[ S=\left( \begin{array}{cccccc}
2$\text{ }$& 0$\text{ }$ & 0$\text{ }$& 0$\text{ }$ & 0 $\text{ }$& 0\\
0$\text{ }$& 3$\text{ }$ & 0$\text{ }$ & 0$\text{ }$ & 0 $\text{ }$& 0\\
0$\text{ }$& 0$\text{ }$ & 2$\text{ }$ & 0$\text{ }$ & 0 $\text{ }$& 0\\
0$\text{ }$& 0$\text{ }$ & 0$\text{ }$ & 1$\text{ }$ & 0 $\text{ }$& 0\\
0$\text{ }$& 0$\text{ }$ & 0$\text{ }$ & 0$\text{ }$ & 1 $\text{ }$& 0 \\
0$\text{ }$& 0$\text{ }$ & 0$\text{ }$ & 0$\text{ }$ & 0 $\text{ }$& 1\end{array} \right).\]

Then, the transformation of $D$ by $S$ is computed as follows:
\[ D''=SD=\left( \begin{array}{ccc}
0.704$\text{}$ & 0$\text{ }$ & 0.352\\
0$\text{}$ & 0.528$\text{ }$ & 1.056\\
0$\text{}$ & 1.908$\text{ }$ & 0\\
0$\text{}$ &0.477$\text{ }$ & 0\\
0.954$\text{}$ & 0$\text{ }$&0\\
0$\text{}$ & 0.477$\text{ }$&0\end{array} \right).\] 

The first column of $D''$ represents the transformation of $D_1$ by $S$, the second column of $D''$ the transformation of $D_2$ by $S$, and the third column of $D''$ represents the transformation of $D_3$ by $S$. By means of the scaling matrix $S$, the weight of $term_1$ and the weight of $term_3$ in a document collection are multiplied by two, while the weight of $term_2$ in a document collection is multiplied by three. The weight of $term_4$, the weight of $term_5$, and the weight of $term_6$ are invariant under $S$.
\end{example}

\begin{proposition}
\label{prop:dsoperator}
Let $V$ be an $n$-dimensional vector space over $\mathbb{Re}$ of the VSM. A square matrix $s \in GL(n, \mathbb{Re})$ has $n$ linearly independent eigenvectors if and only if it represents a diagonalizable scaling linear operator of $V$.
\end{proposition}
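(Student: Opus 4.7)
The plan is to treat this as essentially the standard diagonalizability criterion, with the only nonstandard ingredient being the invertibility assumption on $s$, which guarantees that the diagonal form is actually a \emph{scaling} matrix in the sense introduced before Proposition~\ref{prop:nonprojective} (all diagonal entries nonzero). I would split the argument into the two implications of the biconditional.

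For the forward direction, assume $s \in GL(n, \mathbb{Re})$ admits $n$ linearly independent eigenvectors $v_1,\ldots,v_n$ with corresponding eigenvalues $\lambda_1,\ldots,\lambda_n$. Let $P$ be the matrix whose columns are these eigenvectors; since they are linearly independent, $P$ is invertible, and the usual computation gives $P^{-1}sP = \mathrm{diag}(\lambda_1,\ldots,\lambda_n)$. Because $s$ is invertible, $\det(s)=\lambda_1\cdots\lambda_n\neq 0$, so every $\lambda_i\neq 0$ and the diagonal matrix is an invertible scaling matrix in the sense of the paper. Interpreting $s$ as the matrix of a linear operator $T$ of $V$ in the standard basis, the ordered basis $\{v_1,\ldots,v_n\}$ is one with respect to which the transformation matrix of $T$ is this invertible scaling matrix, so $T$ is a diagonalizable scaling linear operator.

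For the backward direction, assume $s$ represents a diagonalizable scaling linear operator $T$ of $V$. By definition, there exists an ordered basis $\beta=\{v_1,\ldots,v_n\}$ of $V$ with respect to which the matrix of $T$ is an invertible scaling matrix $D=\mathrm{diag}(d_1,\ldots,d_n)$ with $d_i\neq 0$. This means $T(v_i)=d_iv_i$ for each $i$, so each $v_i$ is an eigenvector of $T$ with eigenvalue $d_i$. Passing from $T$ to its matrix $s$ in the standard basis via the change-of-basis matrix $P$ whose columns are the coordinate vectors of $v_1,\ldots,v_n$, we obtain $n$ eigenvectors of $s$ (the columns of $P$); they are linearly independent because $P$, being a change-of-basis matrix, is invertible.

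I do not anticipate a serious obstacle: the content is the standard fact that a matrix is diagonalizable iff it has a full set of linearly independent eigenvectors, and the only subtlety is to make the bookkeeping between "linear operator of $V$" and "matrix in $GL(n,\mathbb{Re})$" explicit, and to invoke $s\in GL(n,\mathbb{Re})$ precisely where nonvanishing of the diagonal entries (hence the qualifier \emph{scaling}, as opposed to arbitrary diagonal) is needed.
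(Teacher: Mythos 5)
Your proof is correct and follows essentially the same route as the paper's: invoke the standard criterion that $n$ linearly independent eigenvectors is equivalent to diagonalizability, use invertibility of $s$ (via equality of determinants of similar matrices) to conclude the diagonal entries are nonzero so the diagonal form is an invertible scaling matrix, and translate between the matrix and the linear operator by a change of basis. The only difference is cosmetic: you carry out the $P^{-1}sP$ computation explicitly where the paper cites its Theorem~\ref{thm:diagonalizable}, Lemma~\ref{lem:determinants}, Theorem~\ref{thm:changeofbasis}, and Proposition~\ref{prop:nonprojective}.
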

\begin{proof}
$(\Rightarrow)$\\
Assume that a square matrix $s \in GL(n, \mathbb{Re})$ has $n$ linearly independent eigenvectors. Since $s \in GL(n, \mathbb{Re})$ by assumption, $s$ is invertible. It follows that the determinant of $s$ is not zero. Since $s$ has $n$ linearly independent eigenvectors by assumption and similar matrices have the same determinant, $s$ is diagonalizable to an invertible diagonal matrix $s^\prime \in D(n, \mathbb{Re})$ by Theorem~\ref{thm:diagonalizable}\footnote{See Appendix for Theorem A.1--A.4 and Lemma A.1--A.3.} and Lemma~\ref{lem:determinants}. Therefore, by Theorem~\ref{thm:changeofbasis} and Proposition~\ref{prop:nonprojective}, it represents a diagonalizable scaling linear operator of $V$.\\
$(\Leftarrow)$\\
If $s \in GL(n, \mathbb{Re})$ represents a diagonalizable scaling linear operator of $V$, then it is diagonalizable by Theorem~\ref{thm:changeofbasis}. Therefore, $s\in GL(n, \mathbb{Re})$ has $n$ linearly independent eigevenvectors by Theorem~\ref{thm:diagonalizable}.
\end{proof}
\noindent\textbf{Remarks. }The above proof of Proposition~\ref{prop:dsoperator} involves Theorem~\ref{thm:changeofbasis}, which in turn involves a change of basis of a vector space. In ~\cite{Melucci2005,Melucci2008a,Melucci2008b,Mbarek2014a,Mbarek2014b} context is modeled by a basis of a vector space of the VSM. By Proposition~\ref{prop:dsoperator}, a certain type of invertible linear operators of a vector space of the VSM can be simplified to a type of diagonalizable linear operators by means of a change of context if context is modeled by a basis of a vector space of the VSM.\\

Similarly to the above proposition, we have the following lemma by Lemma~\ref{lem:symmetricmatrix}.
\begin{lemma}
\label{lem:sym}
Let $V$ be an $n$-dimensional vector space over $\mathbb{Re}$ of the VSM. If $s \in GL(n, \mathbb{Re})$ is symmetric, it represents a diagonalizable scaling linear operator of $V$.
\end{lemma}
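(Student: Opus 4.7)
The plan is to reduce the statement to Proposition~\ref{prop:dsoperator} by feeding it the output of the spectral theorem for real symmetric matrices (Lemma~\ref{lem:symmetricmatrix}). Since $s$ is symmetric with real entries, Lemma~\ref{lem:symmetricmatrix} supplies an orthonormal basis of eigenvectors of $s$; in particular, $s$ possesses $n$ linearly independent eigenvectors.

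Combined with the hypothesis $s \in GL(n, \mathbb{Re})$, this matches the premises of the forward direction of Proposition~\ref{prop:dsoperator} exactly. Invoking that direction yields the conclusion that $s$ represents a diagonalizable scaling linear operator of $V$, which is precisely what is to be shown. So on the level of logical scaffolding, the proof is a one-line composition: Lemma~\ref{lem:symmetricmatrix} followed by the ($\Rightarrow$) direction of Proposition~\ref{prop:dsoperator}.

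The main (and only mildly nontrivial) obstacle is to make sure that the diagonal form supplied by the spectral theorem actually lies in $D(n, \mathbb{Re})$ rather than merely in the set of all diagonal matrices, i.e., that every eigenvalue of $s$ is nonzero. This is exactly the role played by Lemma~\ref{lem:determinants} in the proof of Proposition~\ref{prop:dsoperator}: similar matrices share the same determinant, and $\det(s)\neq 0$ by the invertibility hypothesis, so the product of the diagonal entries of the diagonalized form is nonzero and therefore no eigenvalue vanishes. Once this is observed, Theorem~\ref{thm:changeofbasis} together with Proposition~\ref{prop:nonprojective} identifies $s$ with a diagonalizable scaling linear operator of $V$, exactly as in the proof of Proposition~\ref{prop:dsoperator}. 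I would write the argument in two or three sentences, explicitly citing Lemma~\ref{lem:symmetricmatrix} to produce the $n$ independent eigenvectors and then pointing to Proposition~\ref{prop:dsoperator} to close the argument.
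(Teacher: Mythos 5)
Your proposal is correct and matches the paper's intended argument: the paper itself derives this lemma ``similarly to the above proposition\ldots by Lemma~\ref{lem:symmetricmatrix}'', i.e.\ by using the diagonalizability of real symmetric matrices (hence $n$ linearly independent eigenvectors via Theorem~\ref{thm:diagonalizable}) and then invoking the forward direction of Proposition~\ref{prop:dsoperator}, with Lemma~\ref{lem:determinants} guaranteeing the nonzero eigenvalues. Your write-up is, if anything, more explicit than the paper's one-line justification.
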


In Example~\ref{example:ex1} we considered a Householder matrix $H$ given by $I - 2uu^\top$, where $u$ is a unit vector orthogonal to the selected vector hyperplane. Since $(I - 2uu^\top)^\top=I^\top-2(u^\top)^\top(u^\top) = I - 2uu^\top$, it is symmetric. Since the determinant of a householder matrix is $-1$~\cite{Householder1958}, we have $H \in GL(n, \mathbb{Re})$. Therefore, by Lemma~\ref{lem:sym}, $H$ may represent a diagonalizable scaling linear operator of $V=\mathbb{Re}^n$.

\begin{example}
Suppose that a feature space $V=\mathbb{Re}^4$ of the VSM has four features ($location_1$, $location_2$, $height$, and $brightness$) and some normalized feature vectors. Let $B=\{e_1, e_2, e_3, e_4\}$ denote the standard (ordered) basis of $\mathbb{Re}^4$. Now, four features in the feature space are interpreted in such a manner that $e_1:=location_1, e_2:=location_2, e_3:=height$, and $e_4:=brightness$. Suppose also that the transformation matrix $[T]_B$ of a linear operator $T:V \rightarrow V$ with respect to $B$ is given as follows.
\[ [T]_B=\left( \begin{array}{cccc}
3$\text{ }$& 1$\text{ }$ & 0$\text{ }$& 0\\
1$\text{ }$& 3$\text{ }$ & 0$\text{ }$ & 0\\
0$\text{ }$& 0$\text{ }$ & 1$\text{ }$ & 0 \\
0$\text{ }$& 0$\text{ }$ & 0$\text{ }$ & 1 \end{array} \right).\]

Since $[T]_B$ is an invertible and symmetric matrix, it is diagonalizable to an invertible diagonal matrix in $D(4, \mathbb{Re})$ by Lemma~\ref{lem:symmetricmatrix} and~\ref{lem:determinants}. In other words, there is a transition matrix $P$ from an ordered basis $B^\prime=\{e_1^\prime, e_2^\prime, e_3^\prime, e_4^\prime\}$ to $B=\{e_1, e_2, e_3, e_4\}$ such that $[T]_{B^\prime} = P^{-1}[T]_BP$ is an invertible diagonal matrix, i.e., $[T]_{B^\prime} \in D(4, \mathbb{Re})$. By using the diagonalization procedure (see~\cite{Anton1993}), we have

\[ P=\left( \begin{array}{cccc}
1/\sqrt{2}$\text{ }$& -1/\sqrt{2}$\text{ }$ & 0$\text{ }$& 0\\
1/\sqrt{2}$\text{ }$& 1/\sqrt{2}$\text{ }$ & 0$\text{ }$ & 0\\
0$\text{ }$& 0$\text{ }$ & 1$\text{ }$ & 0 \\
0$\text{ }$& 0$\text{ }$ & 0$\text{ }$ & 1 \end{array} \right)\;\;\;\;\textrm{and}\;\;\;\;
[T]_{B^\prime}=\left( \begin{array}{cccc}
4$\text{ }$& 0$\text{ }$ & 0$\text{ }$& 0\\
0$\text{ }$& 2$\text{ }$ & 0$\text{ }$ & 0\\
0$\text{ }$& 0$\text{ }$ & 1$\text{ }$ & 0 \\
0$\text{ }$& 0$\text{ }$ & 0$\text{ }$ & 1 \end{array} \right).\] 

It follows that $e_1^\prime=1/\sqrt{2}e_1 - 1/\sqrt{2}e_2$, $e_2^\prime=1/\sqrt{2}e_1 + 1/\sqrt{2}e_2$, $e_3^\prime=e_3$, and $e_4^\prime=e_4$. Therefore, $[T]_{B^\prime}\in D(4, \mathbb{Re})$ is a transformation matrix of a  diagonalizable scaling linear operator of $V$ with respect to $B^\prime$ by Proposition~\ref{prop:nonprojective}. 
\end{example}

Let $B(n, \mathbb{Re})$ be the set of all $n \times n$ invertible upper triangular matrices with real entries. $B(n, \mathbb{Re})$ forms a subgroup of $GL(n, \mathbb{Re})$, called the \emph{standard Borel subgroup}~\cite{Alperin1995} of $GL(n, \mathbb{Re})$. 

Let $V=\mathbb{Re}^n$ be a vector space of the VSM and $B=\{e_1, \ldots, e_n\}$ be its fixed standard (ordered) basis. Then, there is an ascending chain of subspaces $\{0\} \subset V_1=\mathbb{Re} \subset V_2=\mathbb{Re}^2 \subset \cdots \subset V_{n-1}=\mathbb{Re}^{n-1} \subset V_n=\mathbb{Re}^n$, in which each $V_i$ for $1 \leq i \leq n$ is spanned by basis elements $e_1, \ldots, e_i$. This ascending chain is called the \emph{standard complete flag}~\cite{Alperin1995} of $V=\mathbb{Re}^n$. The following proposition describes that if information-object vectors in $V=\mathbb{Re}^n$ are transformed by an $n\times n$ invertible upper triangular matrix $m \in B(n, \mathbb{Re})$, it preserves  the standard complete flag of $V=\mathbb{Re}^n$.

\begin{proposition}
\label{prop:uppertriangular}
Let $V=\mathbb{Re}^n$ be a vector space of the VSM and let $\{0\} \subset V_1=\mathbb{Re} \subset V_2=\mathbb{Re}^2 \subset \cdots \subset V_{n-1}=\mathbb{Re}^{n-1} \subset V_n=\mathbb{Re}^n$ be the standard complete flag of $V=\mathbb{Re}^n$. Then, $gV_i=V_i$ for $1 \leq i \leq n$, where $g \in B(n, \mathbb{Re})$ is an $n \times n$ invertible upper triangular matrix with real entries.
\end{proposition}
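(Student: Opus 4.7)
The plan is to establish the inclusion $gV_i \subseteq V_i$ first and then upgrade it to equality using a dimension argument based on the invertibility of $g$.

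First, I would unpack the definitions: the standard basis vector $e_j$ is represented by the column vector with a $1$ in position $j$ and zeros elsewhere, so $ge_j$ is precisely the $j$-th column of $g$. Since $g \in B(n,\mathbb{Re})$ is upper triangular, its $j$-th column has zero entries in all rows strictly below $j$; hence $ge_j \in \mathrm{span}(e_1,\ldots,e_j) = V_j$. For any $j \le i$ this gives $ge_j \in V_j \subseteq V_i$. Because $V_i$ is spanned by $e_1,\ldots,e_i$ and $g$ is linear, it follows that $gV_i \subseteq V_i$.

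Next I would obtain the reverse inclusion. Since $g \in B(n,\mathbb{Re}) \subseteq GL(n,\mathbb{Re})$, the map $v \mapsto gv$ is an injective (in fact bijective) linear map on $\mathbb{Re}^n$. In particular its restriction to $V_i$ is injective, so $\dim(gV_i) = \dim V_i = i$. Combined with the inclusion $gV_i \subseteq V_i$ and the fact that $V_i$ is finite-dimensional, equality $gV_i = V_i$ follows immediately.

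I do not anticipate a serious obstacle here; the argument is essentially a two-line observation about columns of upper triangular matrices together with a standard rank/dimension argument. The only point worth being careful about is making explicit that invertibility of $g$ (which is built into the definition of the standard Borel subgroup) is what forces the inclusion to be an equality rather than merely a containment, since a non-invertible upper triangular matrix could send $V_i$ into a proper subspace of itself.
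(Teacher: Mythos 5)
Your proof is correct. It is worth noting, though, that the paper does not actually supply an argument here: its ``proof'' consists of a single sentence deferring to the known fact that the standard Borel subgroup stabilizes the standard complete flag, with a citation to Alperin--Bell. What you have written is a self-contained derivation of exactly that fact: the observation that $ge_j$ is the $j$-th column of $g$, hence lies in $\mathrm{span}(e_1,\ldots,e_j)=V_j$ when $g$ is upper triangular, gives $gV_i\subseteq V_i$, and the dimension count using injectivity of $v\mapsto gv$ upgrades the containment to equality. Both routes are legitimate; yours buys transparency and, usefully, isolates precisely where invertibility enters (without it one only gets $gV_i\subseteq V_i$, possibly proper), which is a point the paper's one-line citation leaves implicit.
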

\begin{proof}
It follows immediately from the fact that the standard Borel subgroup $B(n, \mathbb{Re})$ of $GL(n, \mathbb{Re})$ stabilizes the standard complete flag of $V=\mathbb{Re}^n$ (see~\cite{Alperin1995} for further details).
\end{proof}

\noindent\textbf{Remarks. } In IR high-dimensional information-object vectors in a vector space of the VSM are often projected into a low-dimensional subspace in order to improve computational efficiency~\cite{Turney2010}. Now, consider information-object vectors in a vector space $V_n=\mathbb{Re}^n$ of the VSM and project them into a subspace $V_i$ $(1\leq i \leq n)$ of $V_n=\mathbb{Re}^n$. By Proposition~\ref{prop:uppertriangular}, the projected information-object vectors are transformed and remained in that subspace by a linear operator of $V_n=\mathbb{Re}^n$ if the transformation matrix of the linear operator with respect to the standard (ordered) basis is an invertible upper triangular matrix with real entries.
\\

For instance, if document $D_3$ in Example~\ref{example:ex1} is transformed by $t \in B(6, \mathbb{Re})$, it still resides in the subspace spanned by the basis element corresponding to $term_1$ and the basis element corresponding to $term_2$. This is not the case if document $D_3$ is transformed by, let us say, a $6\times 6$ invertible lower triangular matrix formed by replacing the (3, 1)-entry of the $6 \times 6$ identity matrix with 1.  

We next describe the dual space of a vector space $V$ over $\mathbb{Re}$ of the VSM. Each information-object vector in $V$ may associate with a scalar-valued quantity. For instance, if a bag of words consists of terms involving product or service items in a \emph{recommender system}~\cite{Gkatzioura2013,Wang2012, Linden2003}, each term may associate with a cost (e.g., purchase price). Now, consider the query and documents in Figure~\ref{fig:QandD}, where the vector space of the VSM is $V=\mathbb{Re}^6$. Let $B=\{u_1,\ldots, u_6\}$ be an ordered basis of $\mathbb{Re}^6$. Those six terms are interpreted in such a manner that $u_1:=term_1,\ldots, u_6:=term_6$. Using the frequency weighting scheme, we have $Q=u_1+u_2$, $D_1=2u_1+2u_5$, $D_2=u_2+2u_3+u_4+u_6$, and $D_3=u_1+2u_2$. We now consider the dual space $\widehat V$ of a vector space $V$ of the VSM. Suppose that the costs involving each term are $3$ for $u_1$ and 4, 5, 6, 6 and 7 for $u_2$, $u_3$, $u_4$, $u_5$, and $u_6$, respectively. An important linear functional in the dual space $\widehat V$ of $V$ is $\phi=3\hat u_1 + 4 \hat u_2 +5 \hat u_3 + 6 \hat u_4 + 6 \hat u_5 + 7 \hat u_6$, in which $<\phi, u_1>=3, <\phi, u_2>=4$, and so on. By pairing $\phi$ with a term, the cost of the term is restored. Similarly, by pairing $\phi$ with an information-object vector, the total cost of the information-object vector is obtained. For instance, the total cost of $D_2$ is $<\phi, D_2>=4+ 2\times 5+6+7=27$. The following proposition describes the relationship between the representation $\rho:G \rightarrow GL(V)$ of $G$ and the dual representation $\hat \rho(g)=[\rho(g^{-1})]^\top:\widehat V \rightarrow \widehat V$ of $G$ used in the VSM.

\begin{proposition}
\label{prop:dual}
Let $V$ be a vector space over $\mathbb{Re}$ of the VSM and $\widehat V$ be the dual space of $V$. Let $\rho:G \rightarrow GL(V)$ be a representation of $G$ and let $\hat \rho:G \rightarrow GL(\widehat V)$ be the dual representation of $G$ to $\rho:G \rightarrow GL(V)$ acting on $\widehat V$ given by
\begin{center}
$\hat \rho(g)=[\rho(g^{-1})]^\top:\widehat V \rightarrow \widehat V$.
\end{center}
Then, $<\hat \rho(g)(\hat v), \rho(g)(v)>=<\hat v, v>$ for all $g \in G, v \in V$, and $\hat v \in \widehat V$.
\end{proposition}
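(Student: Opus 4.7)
The plan is to prove the identity by a direct matrix computation, unwinding the three ingredients in the statement: the evaluation pairing between $V$ and $\widehat V$, the definition $\hat \rho(g)=[\rho(g^{-1})]^\top$, and the fact that $\rho$ is a group homomorphism. Once a basis of $V$ and the dual basis of $\widehat V$ are fixed, the pairing $<\hat v, v>$ takes the concrete form $\hat v^\top v$, where $v$ is a column vector in $\mathbb{Re}^n$ and $\hat v$ is a column vector of coordinates with respect to the dual basis (so that its transpose is a row vector acting by evaluation on $v$).

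First I would expand the left-hand side using this matrix form of the pairing. Writing $\hat\rho(g)(\hat v)=[\rho(g^{-1})]^\top \hat v$ and $\rho(g)(v) = \rho(g) v$, the pairing becomes
\[
<\hat \rho(g)(\hat v),\, \rho(g)(v)> \;=\; \bigl([\rho(g^{-1})]^\top \hat v\bigr)^\top \bigl(\rho(g) v\bigr).
\]
Next I would apply the transpose identity $(AB)^\top = B^\top A^\top$ together with the involution $(A^\top)^\top = A$ to the first factor, turning this expression into
\[
\hat v^\top \, \rho(g^{-1}) \, \rho(g) \, v.
\]

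The third step uses that $\rho:G\to GL(V)$ is a group homomorphism, so $\rho(g^{-1})\rho(g) = \rho(g^{-1}g) = \rho(e) = I$, the identity matrix. Substituting this collapses the expression to $\hat v^\top v = <\hat v, v>$, which is exactly the right-hand side. Since $g$, $v$, and $\hat v$ were arbitrary, this finishes the proof.

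There is essentially no main obstacle here; the statement is a dualization identity and the only mild subtlety is keeping the conventions for coordinate vectors and dual coordinate vectors consistent, so that the pairing really is $\hat v^\top v$ in matrix form. Once that is pinned down the result is a two-line calculation, and no additional results from the excerpt need to be invoked beyond the definition of a group representation and the explicit definition of $\hat\rho$ given in the statement.
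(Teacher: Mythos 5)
Your proof is correct. The paper itself does not write out an argument at this point: its proof of the proposition is the single line ``See Lemma~A.3,'' and that lemma is in turn quoted from the literature; the appendix only sketches the underlying idea in coordinate-free form via the adjoint identity $(A^\top\phi)(v)=\phi(Av)$ and the commuting diagram giving $(g\phi)(v)=\phi(g^{-1}v)=((g^{-1})^\top\phi)(v)$. Your computation is precisely the matrix-coordinate version of that same argument: once a basis of $V$ and its dual basis of $\widehat V$ are fixed, the pairing becomes $\hat v^\top v$, the adjoint identity becomes the transpose rule $(AB)^\top=B^\top A^\top$, and the whole statement collapses to $\hat v^\top\,\rho(g^{-1})\rho(g)\,v=\hat v^\top v$ using that $\rho$ is a homomorphism, so $\rho(g^{-1})\rho(g)=\rho(e)=I$. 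What your route buys is a self-contained two-line verification that does not lean on an external citation; what the paper's route buys is basis independence, since the coordinate-free form makes clear the identity does not depend on the choice of basis or on $V$ being $\mathbb{Re}^n$. The one convention you rightly flag --- that the coordinate vector of $\hat\rho(g)\hat v$ in the dual basis is $[\rho(g^{-1})]^\top\hat v$ --- is exactly consistent with the paper's definition $(A^\top\phi)(v)=\phi(Av)$, so there is no gap.
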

\begin{proof}
See Lemma~\ref{lem:dualrep}.
\end{proof}
\begin{example}
Let $V=\mathbb{Re}^2$ be two dimensional vector space with standard (ordered) basis elements $e_1=[1, 0]^\top$ and $e_2=[0, 1]^\top$, and let $\widehat V$ be the dual space of $V$ with ordered basis elements $\hat e_1$ and $\hat e_2$ selected by Theorem~\ref{thm:dualbasis}. Let $u$ be an information-object vector in $V$ and $\psi$ be a linear functional in $\widehat V$ such that $u=e_1+e_2$ and $\psi=2\hat e_1 + 4\hat e_2$. We use the frequency weighting scheme for $u$ and $\psi$. Therefore, the document corresponding to $u$ consists of the term corresponding to $e_1$ and the term corresponding to $e_2$. Similarly, $\psi$ can be interpreted in such a manner that the cost of the term corresponding to $e_1$ is 2 and the cost of the term corresponding to $e_2$ is 4. The total cost of $u$ is obtained by pairing $\psi$ with $u$, that is, $<\psi, u>=<2\hat e_1+4\hat e_2, e_1+e_2>=6$. Let $\rho:D(2, \mathbb{Re})\rightarrow GL(2, \mathbb{Re})$ be a matrix representation of $D(2, \mathbb{Re})$ associated with $V$ such that $\rho(g)=\left(
\begin {matrix}
1 & 0 \\
0 & 2 \\
\end {matrix}
\right)
$ for $g \in D(2, \mathbb{Re})$. Then, $\rho(g)$ transforms $u=e_1+e_2$ into $u^\prime=e_1 + 2e_2$. We see that $u^\prime$ now consists of a single $e_1$ and two ${e_2}$'s. Let $\hat \rho:D(2, \mathbb{Re}) \rightarrow GL(2, \mathbb{Re})$ be the dual matrix representation of $D(2, \mathbb{Re})$ associated with $\widehat V$ as shown in Proposition~\ref{prop:dual}. We then have $\hat \rho(g)=[\rho(g^{-1})]^\top= \left(
\begin {matrix}
1 & 0 \\
0 & 1/2 \\
\end {matrix}
\right)$ that satisfies $<\hat \rho(g)(\psi), \rho(g)(u)>=<\psi, u>$. Note that $\rho(g), \hat \rho(g) \in D(2, \mathbb{Re})$ for $g \in D(2, \mathbb{Re}$). Now, $\hat \rho(g)$ transforms $\psi=2\hat e_1 + 4\hat e_2$ into $\psi^\prime=2\hat e_1 + (4/2)\hat e_2=2\hat e_1 + 2\hat e_2$. This means that the cost of the term corresponding to $e_2$ has to be reduced to the half of the original cost of the term corresponding to $e_2$ so that the value of $<\psi, u>$ is invariant, i.e., $<\hat \rho(g)(\psi), \rho(g)(u)>=<\psi^\prime, u^\prime>=<\psi, u>$.
\end{example}
\vspace{-10.5pt}

\section{Related Work and Discussion}
\label{sec:relatedwork}
The proper representation of information objects plays an important role in information retrieval (IR), since without it, we cannot expect the good retrieval performance. \\
\indent This paper has assumed that information objects are represented by vectors in a vector space of the VSM and that certain types of transformations of information objects are well-defined by linear transformations of a vector space of the VSM. The results shown in this paper are concerned with the representation of information objects involving several types of transformations in a vector space of the VSM for the purpose of information retrieval (IR), semantics, etc. \\
\indent By using group representation theory and linear algebra, this paper provides the mathematical foundation of vector space representation of information objects under group actions, allowing the known group-theoretical results to be adapted for vector space representation of information objects used in IR, semantics, etc.\\
\indent We have discussed several groups of invertible linear transformations on a vector space of the VSM in previous sections.\\
\indent In~\cite{Melucci2005,Melucci2008a,Melucci2008b,Mbarek2014a,Mbarek2014b} context change is modeled by linear transformations from one basis to another in a vector space of the VSM, in order to reflect the information needs evolving with users, time, spaces, etc.\\
\indent Permutation transformations using permutations of vector coordinates on a \emph{word space}~\cite{Sahlgren2008} in order to capture and encode word-order information are discussed in~\cite{Sahlgren2008}.\\
\indent \emph{Unitary transformations}~\cite{Rijsbergen2004,Hoenkamp2003} on a \emph{Hilbert vector space}~\cite{Johnsonbaugh1981} used in IR are discussed in~\cite{Rijsbergen2004}, in which a Hilbert vector space is a \emph{complete inner product space}~\cite{Johnsonbaugh1981}.\\ 
\indent In~\cite{Rijsbergen2004} the notions of quantum mechanics (QM)~\cite{Feynman1965}, such as \emph{state vector}~\cite{Feynman1965}, \emph{observable}~\cite{Feynman1965,Rijsbergen2004}, \emph{superposition}~\cite{Beiser1995}, and \emph{uncertainty}~\cite{Beiser1995,Feynman1965}, are translated into the notions of IR, intending to apply some of the known theorems (e.g., \emph{Gleason's theorem}~\cite{Rijsbergen2004}) of QM to the IR context. In that book a document is represented by a vector in a Hilbert vector space, while relevance is represented by a \emph{Hermitian operator}~\cite{Rijsbergen2004} that encapsulates the uncertainty involving relevance. (The interested reader may also refer to~\cite{Widdows2004} for the geometry of conceptual space using vector spaces and quantum theory.)\\
\indent The dual space model for semantic relations and compositions is discussed in~\cite{Turney2012}, which consists of a domain space and  a function space for two distinct similarity measures. However, it does not involve any dual space of linear functionals on a vector space. Meanwhile, the dual space of linear functionals on a vector space is involved in \emph{Dirac notation}~\cite{Rijsbergen2004,Feynman1965} that is used for~\emph{relevance feedback}~\cite{Grossman2004} and \emph{ostensive retrieval}~\cite{Rijsbergen2004} in IR (see~\cite{Rijsbergen2004}).\\
\indent Although group representation theory involving \emph{tensor product}~\cite{Dummit2004} of vector spaces are well-studied in mathematics~\cite{Fulton1991}, we have not considered any tensor product of vector spaces for semantics in terms of group representation theory in this paper. In computational and mathematical linguistics~\cite{Coecke2010,Turney2010,Turney2012} the vector space tensor product is often used to model \emph{compositionality}~\cite{Mitchell2011} (see~\cite{Grefenstette2011,Grefenstette2013} also for tensor-based compositionality). In~\cite{Clark2008,Coecke2010} a \emph{compositional distributional model of meaning}~\cite{Coecke2010,Clark2008} using \emph{category theory}~\cite{Maclane1998} is discussed, where tensor product is employed for the composition of meanings and types. In that framework VSMs are used for distributional theory of meaning~\cite{Curran2004}, and \emph{Pregroups}~\cite{Clark2008} are used for a compositional theory for grammatical types~\cite{Coecke2010}. (Both the category of vector spaces and the category of Pregroups are examples of \emph{compact closed categories}~\cite{Maclane1998}. The interested reader may refer to~\cite{Coecke2010} for further details.) We leave it as our future work to consider tensor product representation of certain information objects under group actions for the purpose of IR, semantics, machine learning~\cite{Duda2012}, etc.

\section{Conclusions}
\label{sec:Conclusions}
Although group theory is a major area of research in mathematics, few researches have been done how it is utilized for the VSM. This paper discussed certain dynamic transformations of information objects used in the VSM by means of group-theoretical methods. In our framework an information object is considered as a dynamic entity rather than a static one, where a dynamic transformation of information objects is represented by an element of a group of invertible linear operators on a vector space of the VSM. Several groups act on a vector space $V$ of the VSM by means of their matrix representations, in order to perform a dynamic transformation of information-object vectors systematically. We also showed how the dual space $\widehat V$ of $V$ can be employed for the existing VSM. We leave it as an open question to allow other groups that are not discussed in this paper to act on a vector space of the VSM and to derive the useful properties involving some dynamic transformations of information objects used in the VSM.

\begin{appendix}
\section*{Appendix. Vector Spaces, Groups, and Representations}
\label{sec:Appendix}
\setcounter{theorem}{0}
 \renewcommand{\thetheorem}{A.\arabic{theorem}}
\setcounter{lemma}{0}
    \renewcommand{\thelemma}{A.\arabic{lemma}}
In this section we summarize the necessary mathematical background used in this paper. The definitions and results in this section are found in~\cite{Fraleigh1998,Datta1991,Dummit2004,Hungerford1980,Rotman1994,Alperin1995,Sagan2001, Hall2003,Johnsonbaugh1981,Kim2010,Lee2000,Fulton1991,Anton1993,Bretscher1997}. We assume that the reader has some familiarity with linear algebra.

A \emph{group} $(G,\,\cdot\,)$ is a nonempty set $G$, closed under a binary operation $\cdot$, such that the following axioms are satisfied: (i) $(a\cdot b)\cdot c =  a \cdot (b \cdot c)$ for all $a,b,c \in G$, 
(ii) there is a unique element $e\in G$, called the \emph{identity element} of $G$, such that for all $x\in G,~e \cdot x = x \cdot e = x$,
(iii) for each element $a \in G$, there is an element $a^{-1} \in G$ such that $a \cdot a^{-1}= a^{-1} \cdot a = e$. A group \emph{G} is \emph{abelian} if its binary operator $\cdot$ is commutative such that $a \cdot b = b \cdot a$ for all $a,b \in G$.

Let $I_n = \{1, 2,\ldots, n\}$. The group of all bijections $I_n  \rightarrow I_n$, whose binary operation is function composition, is called the \emph{symmetric group on n letters} and is denoted by $S_n$.

Let $G$ be a group and $H$ be a nonempty subset of a group $G$. If $H$ is a group under the restriction to $H$ of the binary operation of $G$, then $H$ is called a \emph{subgroup} of $G$.

Let $(G, \,\cdot\,)$ and $(G', \,\circ\,)$ be groups. A map $\phi:G \rightarrow G'$ is a \emph{homomorphism} if $\phi(x \cdot y)=\phi(x) \circ \phi(y)$ for all $x,y \in G$.

A \emph{ring} is a nonempty set $R$ together with two binary operations $+\,,\, \times:R \times R \rightarrow R$ (called \emph{addition} and \emph{multiplication}) such that: (i) ($R$, +) is an abelian group, (ii) $(a \times b) \times c=a \times (b \times c)$ for all $a, b, c \in R$, (iii) $a \times (b+c)=(a \times b)+(a \times c)$ and $(a+b)\times c=(a \times c) + (b \times c)$. In addition, (iv) if $a \times b = b \times a$ for all $a,b \in R$, then $R$ is said to be a \emph{commutative ring}, (v) if $R$ contains an element $1_R$ such that $1_R \times a= a\times 1_R=a$ for all $a \in R$, then $R$ is said to be a ring with \emph{unity}. 

If $(R,\,+\,,\,\times\,)$ is a ring and $(G,\,\cdot\,)$ is a group, we also write $ab$ rather than $a \times b$ for $a, b \in R$, and write $ab$ rather than $a \cdot b$ for $a, b \in G$, respectively.

An element $x$ in a ring $R$ with unity is said to be \emph{left} (respectively, \emph{right}) \emph{invertible} if there exists an element $z$ (respectively, $y \in R$) in a ring $R$ such that $zx = 1_R$ (respectively, $xy = 1_R$). An element $x \in R$ that is both left and right invertible is said to be a \emph{unit}.

A ring $R$ with unity $1_R \neq 0$ in which every nonzero element is a unit is called a \emph{division ring}. A \emph{field} is a commutative division ring.

Let $R$ be a ring. A \emph{(left) R-module} is an additive abelian group $M$ together with a \emph{scalar multiplication} defined by a function $R \times M \rightarrow M$ such that for all $r, s \in R$ and $a,b \in M$: (i) {$(rs)a = r(sa)$}, (ii) {$(r+s)a = ra + sa$}, (iii) {$r(a+b)=ra + rb$}. In addition, if $R$ is a ring with unity and {$1_Ra = a$ for all $a \in M$, then $M$ is a unitary \emph{R-module}}.

If $R$ is a field, a unitary $R$-module $M$ is called a \emph{vector space} $M$ over $R$.

In the remainder of this paper $G$ denotes a group, $\mathbb{K}$ a field, and $V$ denotes a finite-dimensional vector space unless otherwise stated.

Let $V$, $W$ be vector spaces over $\mathbb{K}$. A function $T:V \rightarrow W$ is a \emph{linear transformation} from $V$ to $W$ provided that for all $x, y \in V$ and $k \in \mathbb{K}$: (i) $T(x+y)=T(x)+T(y)$, (ii) $T(kx)=kT(x)$. A linear transformation from $V$ to itself is also called a \emph{linear operator} of $V$.
  
A (left) \emph{action} of a group $G$ on a set $X$ is a function $G \times X \rightarrow X$ (given by $(g, x)\mapsto gx$) such that for all $x \in X$ and $g_1, g_2 \in G$: (i) $ex = x$, (ii) $(g_1g_2)x =g_1(g_2x)$. When such an action is given, we say that $G$ \emph{acts} (left) on the set $X$.

The \emph{general linear group} $GL(n, \mathbb{K})$ is the group of all invertible $n \times n$ matrices with entries from $\mathbb{K}$ under matrix multiplication. An $n\times n$ matrix is invertible if and only if its determinant is not zero. Alternatively, the general linear group of $V$ is the group of all invertible linear transformations from $V$ to $V$ and is denoted by $GL(V)$. (If $V$ is a finite $n$-dimensional vector space, then $GL(n, \mathbb{K})$ and $GL(V)$ are isomorphic as groups. See~\cite{Dummit2004} for details.)

The general linear group $GL(n, \mathbb{K})$ and its subgroups act on $V=\mathbb{Re}^n$ by matrix multiplication, considering each vector in $V$ as a column matrix. (That is, if $M \in GL(n, \mathbb{K})$ and $x \in V$, $(M, x)\mapsto Mx$.)

A \emph{linear representation} of $G$ is a group homomorphism $\rho:G \rightarrow GL(V)$ from $G$ into $GL(V)$. Similarly, a \emph{matrix representation} of $G$ is a group homomorphism $\rho^\prime:G \rightarrow$ $GL(n, \mathbb{K})$ from $G$ into $GL(n, \mathbb{K})$.

Suppose $G$ acts on a vector space $V$ over $\mathbb{K}$. The action of $G$ on $V$ is called \emph{linear} if the following conditions are met: (i) $g(v+w)=gv+gw$ for all $g \in G$ and $v, w \in V$,
(ii) $g(kv)=k(gv)$ for all $g \in G, k \in \mathbb{K}$, and $v \in V$. If $G$ acts on $V$ linearly, then $V$ itself is called a representation of $G$, and write  $gv$ or $g \cdot v$ for $\rho(g)(v)$ .

Let $V$ be a vector space over $\mathbb{Re}$. An \emph{inner product} for $V$ is a function ( , ) from $V \times V$ into $\mathbb{Re}$ which satisfies the following for all $x,y,z \in V$ and for all  $k \in \mathbb{Re}$: (i) 
$(kx+y, z)=k(x,z)+(y,z)$, (ii) $(x, y)=(y,x)$, (iii) $(x,x) \geq 0$, (iv) if $(x, x)=0$, then $x=0$.

\begin{theorem}[\cite{Johnsonbaugh1981}]
\label{thm:Innerproduct}
The equation
\begin{center}
$(x, y)=\displaystyle\sum_{k=1}^nx_ky_k$,
\end{center}
where $x=(x_1,\ldots, x_n),\, y=(y_1,\ldots, y_n) \in \mathbb{Re}^n$ defines an inner product on $\mathbb{Re}^n$.
\end{theorem}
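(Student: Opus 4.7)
The plan is to verify the four inner-product axioms one by one, directly from the componentwise definition $(x,y)=\sum_{k=1}^{n} x_k y_k$, using only elementary properties of the real numbers (associativity, commutativity, and distributivity of $+$ and $\times$, together with the fact that squares of reals are non-negative).

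First I would handle axiom (i) by expanding
\[
(kx+y,z) \;=\; \sum_{k'=1}^{n}(kx_{k'}+y_{k'})\,z_{k'},
\]
distributing and splitting the sum into $k\sum x_{k'}z_{k'} + \sum y_{k'}z_{k'}$, which is $k(x,z)+(y,z)$. Axiom (ii) is immediate from commutativity of multiplication in $\mathbb{R}$: $\sum x_k y_k = \sum y_k x_k$. For axiom (iii), note that $(x,x)=\sum_{k} x_k^2$, and since each $x_k \in \mathbb{R}$ gives $x_k^2 \geq 0$, the sum of non-negative reals is non-negative.

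The only step requiring a small remark is axiom (iv): if $(x,x)=\sum_{k} x_k^2 = 0$, then since each summand $x_k^2$ is non-negative, the only way a sum of non-negative reals can vanish is for every summand to vanish; hence $x_k=0$ for all $k$, so $x=0$. This uses the fact that $\mathbb{R}$ is an ordered field in which $a^2=0$ forces $a=0$.

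There is no genuine obstacle here — the proof is purely a mechanical check of the defining axioms, and the only substantive ingredient is the positivity of squares in $\mathbb{R}$, which is what makes the bilinear form $\sum x_k y_k$ positive definite rather than merely symmetric and bilinear. Consequently I expect the write-up to be a short four-item verification.
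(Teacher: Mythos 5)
Your verification is correct and complete: it checks exactly the four axioms (i)--(iv) as the paper defines them, and the only substantive points --- positivity of squares in $\mathbb{R}$ and the fact that a sum of non-negative reals vanishes only if every summand does --- are handled properly (the renaming of the summation index to avoid clashing with the scalar $k$ is also the right care to take). The paper itself offers no proof of this theorem, citing it to Johnsonbaugh instead, so there is nothing to compare against; your argument is the standard one and would serve as the omitted proof.
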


Let $\|v\|={(\sum_{i=1}^nv_{i}^2)}^\frac{1}{2}$, where $v=(v_1,\ldots, v_n) \in \mathbb{Re}^n$. Then, the geometric interpretation of $(u, v)$ is $(u, v)=\|u\|\|v\|\text{cos }\theta$, where $\theta$ is the angle between $u$ and $v$. For an inner product on $V=\mathbb{Re}^n$, we write $u \cdot v$ rather than $(u, v)$ 

If we change an ordered basis $B=\{b_1,\ldots, b_n\}$ of an $n$-dimensional vector space $V$ to the new ordered basis $B^\prime=\{b_1^\prime,\ldots, b_n^\prime\}$, then a vector $v$ has old coordinate matrix $[v]_B$ and a new coordinate matrix $[v]_{B^\prime}$, respectively. It is related to the equation $[v]_B=S[v]_{B^\prime}$, where $S$ is called the \emph{transition matrix} from $B^\prime$ to $B$. If $X$ and $Y$ are square matrices (i.e., $n \times n$ matrices), then $Y$ is \emph{similar to} $X$ if there is an invertible matrix $P$ such that $Y=P^{-1}XP$.

Let $M=(a_{ij})$ be an $n\times n$ matrix. The \emph{main diagonal} of $M$ consists of the entries $a_{ii}$ for $1\leq i \leq n$. A matrix $D$ is called \emph{diagonal} if its non-zero entries appear only on the main diagonal. A matrix $U$ is called \emph{upper triangular} if all entries of $U$ lying below the main diagonal are zero. A matrix $L$ is called \emph{lower triangular} if all entries of $L$ lying above the main diagonal are zero. 

A square matrix $M$ is called \emph{diagonalizable} if it is similar to a diagonal matrix. 

A linear operator $T$ of $V$ is called \emph{diagonalizable} if there exists an (ordered) basis of $V$ with respect to which the transformation matrix of $T$ is a diagonal matrix. 

A square matrix is called \emph{symmetric} if $A=A^\top$.

Let $V$ be a vector space over $\mathbb{K}$. If $T$ is a linear operator of $V$, a nonzero vector $v \in V$ satisfying $Tv = \lambda v$ for some $\lambda \in \mathbb{K}$ is called an \emph{eigenvector} of $T$. The following theorem describes the fundamental fact of a diagonalizable matrix.

\begin{theorem}[\cite{Anton1993}]
\label{thm:diagonalizable}
An $n\times n$ matrix $M$ with real entries is diagonalizable if and only if $M$ has $n$ linearly independent eigenvectors. 
\end{theorem}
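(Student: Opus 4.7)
The plan is to prove both implications by exploiting the equivalence, read column by column, between the single matrix equation $P^{-1}MP = D$ and a family of $n$ eigenvalue equations. Throughout, I will use only the definition of diagonalizability (similarity to a diagonal matrix), the definition of an eigenvector, and the fact that a square matrix is invertible if and only if its columns are linearly independent.

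For the forward direction, I would assume $M$ is diagonalizable, so there exist an invertible $P$ and a diagonal matrix $D = \mathrm{diag}(\lambda_1, \ldots, \lambda_n)$ such that $P^{-1}MP = D$, equivalently $MP = PD$. Writing $P = [\,p_1 \mid p_2 \mid \cdots \mid p_n\,]$ by its columns and comparing the $i$-th columns of $MP$ and $PD$, I would read off $Mp_i = \lambda_i p_i$ for each $i$. Since $P$ is invertible, its columns are nonzero and linearly independent, which supplies the required $n$ linearly independent eigenvectors of $M$.

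For the reverse direction, I would begin with $n$ linearly independent eigenvectors $p_1, \ldots, p_n$ of $M$ with corresponding real eigenvalues $\lambda_1, \ldots, \lambda_n$, and assemble them as the columns of $P = [\,p_1 \mid \cdots \mid p_n\,]$. Linear independence of the columns forces $P$ to be invertible. Setting $D = \mathrm{diag}(\lambda_1, \ldots, \lambda_n)$, the same column-wise computation yields $MP = PD$, hence $P^{-1}MP = D$, so $M$ is similar to a diagonal matrix and is therefore diagonalizable.

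There is no serious obstacle: the argument is entirely classical and symmetric in the two directions. The only point requiring care is the equivalence between invertibility of a square matrix and linear independence of its columns, which is what lets the two implications mirror each other and ensures that the passage between the factorization $P^{-1}MP = D$ and the eigenvector data is reversible.
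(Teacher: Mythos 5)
The paper does not prove this statement at all: it is quoted verbatim from Anton's linear algebra text as background (Theorem~A.1 in the Appendix) and used as a black box in Proposition~3.3. Your argument is the standard textbook proof --- reading $MP=PD$ column by column in both directions, with invertibility of $P$ equivalent to linear independence of its columns --- and it is correct and complete, so it supplies exactly the proof the paper delegates to the reference.
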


\begin{lemma}[\cite{Anton1993,Bretscher1997}]
\label{lem:symmetricmatrix}
Every symmetric matrix with real entries is diagonalizable. 
\end{lemma}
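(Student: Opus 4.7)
The plan is to prove the classical real spectral theorem: every real symmetric matrix $A$ admits an orthonormal basis of eigenvectors, hence is similar (in fact, orthogonally similar) to a diagonal matrix. By Theorem~\ref{thm:diagonalizable}, it suffices to exhibit $n$ linearly independent eigenvectors of $A$, and I would actually produce an orthonormal set of $n$ eigenvectors, which is a strictly stronger conclusion.

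First I would show that every eigenvalue of $A$ is real. View $A$ as a complex matrix and let $\lambda \in \mathbb{C}$ be a root of its characteristic polynomial (guaranteed to exist by the fundamental theorem of algebra), with nonzero eigenvector $v \in \mathbb{C}^n$. Compute $\bar{v}^{\top} A v$ in two ways: once as $\lambda \bar{v}^{\top} v$, and once as $(A\bar{v})^{\top} v = \overline{\lambda}\,\bar{v}^{\top} v$, using $A^{\top} = A$ and that $A$ has real entries. Since $\bar{v}^{\top} v = \|v\|^2 > 0$, one gets $\lambda = \overline{\lambda}$, hence $\lambda \in \mathbb{R}$. Consequently a real eigenvector can be chosen by taking real or imaginary parts.

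Next I would argue by induction on $n$. The base case $n=1$ is trivial. For the inductive step, pick a real eigenvalue $\lambda_1$ with a unit eigenvector $v_1 \in \mathbb{R}^n$, and consider the orthogonal complement $W = v_1^{\perp} \subset \mathbb{R}^n$, which has dimension $n-1$. The key observation is that $W$ is $A$-invariant: if $w \in W$, then using symmetry,
\[
v_1 \cdot (Aw) = (Av_1) \cdot w = \lambda_1 (v_1 \cdot w) = 0,
\]
so $Aw \in W$. Let $A|_W$ denote the restriction; with respect to any orthonormal basis of $W$ this restriction is represented by an $(n-1)\times(n-1)$ symmetric matrix (symmetry is preserved because the inner product on $W$ is inherited from $\mathbb{R}^n$ and $(A|_W u) \cdot w = u \cdot (A|_W w)$). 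The inductive hypothesis yields an orthonormal basis $v_2,\ldots,v_n$ of $W$ consisting of eigenvectors of $A|_W$, and adjoining $v_1$ gives an orthonormal eigenbasis of $\mathbb{R}^n$ for $A$.

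Finally, assembling $v_1,\ldots,v_n$ as the columns of a matrix $P$ produces an orthogonal $P$ with $P^{\top}AP = D$ diagonal, so $A$ is diagonalizable by the definition preceding Theorem~\ref{thm:diagonalizable}. The main obstacle is the $A$-invariance of the orthogonal complement $W$, which is precisely where symmetry is used essentially; without it the inductive reduction collapses. A minor subtlety to handle carefully is the reality of the eigenvalue in step one, which is the only place where complex numbers enter the argument and where the distinction between $A$ being real \emph{and} symmetric (as opposed to merely diagonalizable over $\mathbb{C}$) becomes crucial.
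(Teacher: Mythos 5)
Your proof is correct and complete: it is the standard spectral-theorem argument (reality of eigenvalues via conjugation, $A$-invariance of the orthogonal complement, induction to an orthonormal eigenbasis), and it even establishes the stronger conclusion of orthogonal diagonalizability. The paper itself offers no proof of this lemma, simply citing \cite{Anton1993,Bretscher1997}, and the argument you give is precisely the one found in those standard references, so there is nothing to reconcile.
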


\begin{lemma}[\cite{Anton1993,Bretscher1997}]
\label{lem:determinants}
Similar matrices have the same determinant. 
\end{lemma}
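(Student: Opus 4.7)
The plan is to unpack the definition of similarity and then apply the multiplicative property of the determinant. First I would recall from the preceding text that $Y$ being similar to $X$ means there exists an invertible matrix $P$ with $Y = P^{-1} X P$. So the entire content of the lemma reduces to showing that conjugation by an invertible matrix leaves the determinant unchanged.

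Next I would invoke the standard multiplicative property $\det(AB) = \det(A)\det(B)$ for square matrices of equal size, which I would take as a known fact about determinants (it is one of the basic theorems in any of the cited linear-algebra references). Applying it twice to the expression $P^{-1} X P$ gives $\det(Y) = \det(P^{-1})\det(X)\det(P)$. Since the determinants live in the underlying field $\mathbb{K}$ and therefore commute as scalars, I can rearrange the three factors freely.

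Finally I would use the identity $\det(P^{-1}) = \det(P)^{-1}$, which itself follows from the multiplicative property: applying it to $P^{-1} P = I$ and using $\det(I) = 1$ forces $\det(P^{-1})\det(P) = 1$, and this is legitimate precisely because $P$ is invertible, so $\det(P) \neq 0$. Substituting back yields $\det(Y) = \det(P)^{-1}\det(P)\det(X) = \det(X)$, which is what the lemma asserts.

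There is no real obstacle here; the lemma is a one-line consequence of the multiplicativity of the determinant, and the only thing to be careful about is citing (rather than reproving) that multiplicativity and the nonvanishing of $\det(P)$ for invertible $P$. Both are standard and are already implicit in the appendix's statement that an $n \times n$ matrix is invertible if and only if its determinant is nonzero.
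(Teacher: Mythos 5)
Your proof is correct and is the standard argument: the paper itself gives no proof of this lemma, simply citing it to the linear-algebra references, and your derivation via $\det(Y)=\det(P^{-1})\det(X)\det(P)=\det(X)$ is exactly the one-line argument those references supply. Nothing is missing; the appeal to multiplicativity of the determinant and to $\det(P)\neq 0$ for invertible $P$ is all that is needed.
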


Given a linear operator $T:V \rightarrow V$, the following theorem describes how the transformation matrix of a linear operator of $V$ changes as we change a basis.
\begin{theorem}[\cite{Anton1993}]
\label{thm:changeofbasis}
Let $T:V \rightarrow V$ be a linear operator of $V$ and let $B$ and $B^\prime$ be both bases for $V$. Then, $[T]_B$ and $[T]_{B^\prime}$ are similar, where $[T]_B$ (respectively, $[T]_{B^\prime}$) denotes the transformation matrix of $T$ with respect to $B$ (respectively, $B^\prime$). Specifically, $[T]_{B^\prime} = S^{-1}[T]_BS$, where $S$ is the transition matrix from $B^\prime$ to $B$.
\end{theorem}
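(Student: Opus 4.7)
The plan is to reduce the statement to the defining relations for coordinate vectors and matrix representations, and then to cancel an invertible factor. First I would fix notation: for any $v \in V$, let $[v]_B$ denote its coordinate column relative to $B$, and recall that the matrix $[T]_B$ is characterized by the identity $[Tv]_B = [T]_B[v]_B$ for every $v \in V$ (and similarly for $B'$). The transition matrix $S$ from $B'$ to $B$ is characterized by $[v]_B = S[v]_{B'}$ for every $v \in V$; in particular its columns are the $B$-coordinates of the vectors of $B'$, which are linearly independent, so $S$ is invertible.

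The key step is then a short algebraic manipulation. For an arbitrary $v \in V$, I would write down two expressions for $[Tv]_B$. On the one hand, $[Tv]_B = [T]_B[v]_B = [T]_B S[v]_{B'}$. On the other hand, applying the transition relation to the vector $Tv$ gives $[Tv]_B = S[Tv]_{B'} = S[T]_{B'}[v]_{B'}$. Equating the two yields $S[T]_{B'}[v]_{B'} = [T]_B S [v]_{B'}$ for every $v \in V$.

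Since $v$ ranges over all of $V$, its $B'$-coordinate column $[v]_{B'}$ ranges over all of $\mathbb{K}^n$; in particular, taking $v$ to be successively the elements of $B'$ makes $[v]_{B'}$ run through the standard basis of $\mathbb{K}^n$. It follows that the matrices $S[T]_{B'}$ and $[T]_B S$ agree column by column, hence are equal. Left-multiplying by $S^{-1}$ (which exists by the previous paragraph) yields $[T]_{B'} = S^{-1}[T]_B S$, which also exhibits $[T]_B$ and $[T]_{B'}$ as similar matrices.

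The only subtle point, and the step I would flag as the main obstacle, is justifying that $S$ is indeed invertible and that the defining equation $[v]_B = S[v]_{B'}$ holds universally; both follow from the fact that $B'$ is a basis and that $S$ is assembled from its $B$-coordinate columns, but this should be stated explicitly so that the cancellation of $S$ at the end is rigorous. Everything else is routine manipulation of the coordinate and matrix-representation definitions.
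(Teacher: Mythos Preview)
Your argument is correct and is the standard change-of-basis proof one finds in any linear algebra text: write $[Tv]_B$ two ways using the defining relations $[Tv]_C=[T]_C[v]_C$ and $[v]_B=S[v]_{B'}$, equate, and cancel the invertible $S$. The only caveat is that the paper itself does not supply a proof of this statement at all; Theorem~\ref{thm:changeofbasis} is simply quoted in the Appendix as a known result from \cite{Anton1993}, so there is no in-paper argument to compare against. Your write-up would serve as a self-contained replacement for that citation.
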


Let $V$ be a vector space over $\mathbb{Re}$, and $\mathbb{Re}$ be a one-dimensional vector space over itself. Let $\text{Hom}_{\mathbb{Re}}(V, \mathbb{Re})$ be the set of all linear transformations from $V$ to $\mathbb{Re}$. This set, denoted by $\widehat V$, forms a vector space over $\mathbb{Re}$, which is called the \emph{dual space} of $V$. Elements of $\widehat V$ are called \emph{linear functionals}. 

\begin{theorem}[\cite{Dummit2004}]
\label{thm:dualbasis}
If $B=\{v_1,\ldots, v_n\}$ is a basis of a vector space V over $\mathbb{Re}$, define $\hat v_i \in \widehat V$ for each $i \in \{1,\ldots,n\}$ by its action on the basis $B$ in such a manner that
$\hat v_i(v_j)=\delta_{ij}$ for $1 \leq j \leq n$, where $\delta_{ij}$ for $1 \leq j \leq n$ denotes $0 \in \mathbb{Re}$ if $i \neq j$ and $1 \in \mathbb{Re}$ if $i=j$. Then, $\widehat V$ is a vector space over $\mathbb{Re}$ with basis $\hat{B}=\{\hat v_1,\ldots, \hat v_n\}$. 
\end{theorem}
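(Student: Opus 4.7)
My plan is to establish this dual basis result in three stages, following the standard route. First, I would confirm that $\widehat V = \mathrm{Hom}_{\mathbb{Re}}(V, \mathbb{Re})$ carries a natural $\mathbb{Re}$-vector space structure via pointwise operations, namely $(\phi + \psi)(v) := \phi(v) + \psi(v)$ and $(k\phi)(v) := k\phi(v)$ for $\phi,\psi \in \widehat V$, $k \in \mathbb{Re}$, and $v \in V$. The closure of $\widehat V$ under these operations follows because a pointwise sum or scalar multiple of linear maps into $\mathbb{Re}$ is again linear, and the eight vector space axioms for $\widehat V$ reduce immediately to the corresponding axioms in the target $\mathbb{Re}$, so I would not grind through them.

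Second, I would verify that the functionals $\hat v_i$ are well defined. Since $B$ is a basis of $V$, every $v \in V$ has a unique expansion $v = \sum_j c_j v_j$ with $c_j \in \mathbb{Re}$, and the assignment $\hat v_i(v) := c_i$ is forced by the prescription $\hat v_i(v_j) = \delta_{ij}$ once we insist on linearity. This step relies on the general principle (a consequence of the definition of a basis) that a linear map out of a vector space is uniquely determined by, and may be freely specified on, the elements of any basis; so each $\hat v_i$ exists and lies in $\widehat V$.

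Third, I would show that $\hat B = \{\hat v_1, \ldots, \hat v_n\}$ is both spanning and linearly independent. For the spanning property, given an arbitrary $\phi \in \widehat V$, I would propose the identity $\phi = \sum_{i=1}^n \phi(v_i)\, \hat v_i$ and check it by evaluating each side at a basis vector $v_j$: the left side gives $\phi(v_j)$, and the right side gives $\sum_i \phi(v_i) \delta_{ij} = \phi(v_j)$, so the two linear functionals agree on $B$ and therefore, by the uniqueness principle invoked above, on all of $V$. For linear independence, I would assume $\sum_i c_i \hat v_i = 0$ in $\widehat V$ and evaluate at $v_j$ to obtain $c_j = 0$ for every $j$.

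The only genuine subtlety is making sure the spanning argument is phrased as an equality of linear functionals rather than an equality of values at a single point; once I invoke the fact that two elements of $\widehat V$ agreeing on the basis $B$ must coincide, the proof is essentially complete. I do not expect any serious obstacle, since the entire argument is bookkeeping around the uniqueness of basis expansions in $V$.
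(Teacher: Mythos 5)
Your proposal is correct and is the standard argument: the paper itself offers no proof of this statement, citing it as background (Theorem~A.3, attributed to \cite{Dummit2004}), and the three-stage route you describe --- pointwise vector space structure on $\widehat V$, well-definedness of the $\hat v_i$ via the universal property of a basis, then spanning through the identity $\phi=\sum_i\phi(v_i)\hat v_i$ and independence by evaluation at the $v_j$ --- is exactly the proof found in the cited source. You also correctly flag the one point that is often glossed over, namely that the spanning identity must be verified as an equality of functionals by checking agreement on $B$; nothing is missing.
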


There is a (bilinear) natural pairing $<\cdot, \cdot>$ between $\widehat V$ and \emph{V} defined by $<\phi, v> \stackrel{\rm{def}}{=}  \phi(v)$ for $\phi \in \widehat V$ and $v \in V$. (If $A$ denotes a linear operator of $V$ and $A^\top$ denotes its dual or transpose operator of $\widehat V$, $(A^\top\phi)(v)=\phi(Av)$ for $\phi \in \widehat V, v \in V$~\cite{Hall2003}.)

Let $\widehat V=\text{Hom}_{\mathbb{Re}}(V, \mathbb{Re})$ be the dual space of $V$ and let $\rho:G \rightarrow GL(V)$ be a representation of $G$. The \emph{dual representation} $\hat \rho:G \rightarrow GL(\widehat V)$ to $\rho:G \rightarrow GL(V)$ is the representation of $G$ acting on $\widehat V$ given by $\hat \rho(g)=[\rho(g^{-1})]^\top:\widehat V \rightarrow \widehat V$, where $\hat \rho(g)$ is the transpose of $\rho(g^{-1})$. 

The following lemma describe the relationship between a representation $\rho:G \rightarrow GL(V)$ of $G$ and the dual representation $\hat \rho:G \rightarrow GL(\widehat V)$ of $G$.

\begin{lemma}[\cite{Fulton1991}]
\label{lem:dualrep}
$<\hat \rho(g)(\hat v), \rho(g)(v)>=<\hat v, v>$ for all $g \in G, v \in V$, and $\hat v \in \widehat V$.
\end{lemma}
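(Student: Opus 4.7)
The plan is to unfold the left-hand side of the identity by stringing together three ingredients already supplied in the Appendix: the definition of the natural bilinear pairing $<\phi, v> = \phi(v)$, the transpose identity $(A^{\top}\phi)(w) = \phi(Aw)$, and the fact that $\rho$ is a group homomorphism.

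First, I would apply the definition of the pairing to rewrite
\begin{center}
$<\hat\rho(g)(\hat v),\, \rho(g)(v)> \;=\; \bigl(\hat\rho(g)(\hat v)\bigr)\bigl(\rho(g)(v)\bigr) \;=\; \bigl([\rho(g^{-1})]^{\top}(\hat v)\bigr)\bigl(\rho(g)(v)\bigr)$,
\end{center}
where the second equality simply substitutes the given formula $\hat\rho(g) = [\rho(g^{-1})]^{\top}$.

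Second, I would invoke the transpose identity with $A = \rho(g^{-1})$, $\phi = \hat v$, and $w = \rho(g)(v)$, which pushes the operator inside the functional and turns the expression into $\hat v\bigl(\rho(g^{-1})\rho(g)(v)\bigr)$. Finally, since $\rho: G \to GL(V)$ is a group homomorphism, one has $\rho(g^{-1})\rho(g) = \rho(g^{-1}g) = \rho(e) = I_V$, so the argument reduces to $v$ and the whole expression collapses to $\hat v(v)$, which by the definition of the pairing is exactly $<\hat v, v>$.

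The computation is essentially a two-line chase of definitions, so I do not expect any substantive obstacle. The only place where some care is required is in keeping straight which object acts on which space ($\rho(g)$ on $V$ versus $[\rho(g^{-1})]^{\top}$ on $\widehat V$) and in observing that the inversion $g \mapsto g^{-1}$ baked into the definition of $\hat\rho$ is exactly what causes $\rho(g^{-1})$ and $\rho(g)$ to cancel in the middle step; without that inversion the invariance of the pairing would fail, and $\hat\rho$ would not even be a group homomorphism into $GL(\widehat V)$.
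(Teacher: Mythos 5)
Your computation is correct and is the standard verification: the paper itself states this lemma as a citation to Fulton--Harris without a formal proof, but the discussion immediately following it (the commuting diagram giving $(g\phi)(v)=\phi(g^{-1}v)$ and hence $g\phi=(g^{-1})^{\top}\phi$) is exactly your chain of equalities read in reverse, using the same three ingredients (the pairing $<\phi,v>=\phi(v)$, the transpose identity $(A^{\top}\phi)(w)=\phi(Aw)$, and $\rho(g^{-1})\rho(g)=\rho(e)=I_V$). Nothing is missing.
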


The definition of the dual representation is such that the following diagram commutes~\cite{Fulton1991}:
\begin{center}
\[
\begin{CD}
V@> \phi >> \mathbb{Re} \\
@V{g} VV @ VV {g} V \\
V @>> {g\phi} > \mathbb{Re}\\
\end{CD}
\]
\end{center}
Therefore, $(g\phi)(v)=g\phi(g^{-1}v)$ for all $g \in G$ and $v \in V$. Since $gx=x$ for all $x \in \mathbb{Re}$, we have $(g\phi)(v)=g\phi(g^{-1}v)=\phi(g^{-1}v)$. Since $\phi(g^{-1}v)=({(g^{-1})^\top}\phi)(v)$, we have $g\phi={(g^{-1})^\top}\phi$, which corresponds to the above definition.
\end{appendix}
\bibliographystyle{gCOM}
\bibliography{dkim}
\end{document}